\definecolor{mycolor}{rgb}{0.122, 0.435, 0.698}
\newcommand{\mybox}[1]{%
  \setbox0=\hbox{#1}%
  \setlength{\@tempdima}{\dimexpr\wd0+13pt}%
  \begin{tcolorbox}[colframe=black,boxrule=0.5pt,arc=4pt,left=6pt,right=6pt,top=6pt,bottom=6pt,boxsep=0pt,width=\@tempdima]
    #1
  \end{tcolorbox}
}
\newcommand{\argmax}{\mathop{\text{arg\,max}}}
\definecolor{matlab_blue}{rgb}{0,0.4470,0.7410}
\definecolor{matlab_red}{rgb}{0.8500,0.3250,0.0980}
\newenvironment{definition}
  {\definitionx}
  {\hfill$\vartriangleleft$\enddefinitionx}
\newenvironment{example}
  {\examplex}
  {\hfill$\vartriangleleft$\endexamplex}
\newenvironment{theorem}
  {\theoremx}
  {\hfill$\vartriangleleft$\endtheoremx}
\newcommand{\range}[1]{\llbracket #1 \rrbracket}
\title{Non-Stochastic Hypothesis Testing \\ with Application to Privacy Against Hypothesis-Testing Adversary}
\author{
Farhad Farokhi\thanks{F. Farokhi is with the Data61 at the Commonwealth Scientific and Industrial Research Organisation (CSIRO) and the Department of Electrical and Electronic Engineering at the University of Melbourne, Australia}
\thanks{e-mails: farhad.farokhi@\{data61.csiro.au, unimelb.edu.au\}}
}
\begin{document}

\maketitle

\begin{abstract} In this paper, we consider privacy against hypothesis testing adversaries within a non-stochastic framework. We develop a theory of non-stochastic hypothesis testing by borrowing the notion of uncertain variables from 
non-stochastic information theory. We define tests as binary-valued mappings on uncertain variables and prove a fundamental bound on the best performance of  tests in non-stochastic hypothesis testing. We use this bound to develop a measure of privacy. We then construct reporting policies with prescribed privacy and utility guarantees. The utility of a reporting policy is measured by the distance between the reported and original values. We illustrate the effects of using such privacy-preserving reporting polices on a publicly-available practical dataset of  preferences and demographics of young individuals, aged between 15-30, with Slovakian nationality.
\end{abstract}

\section{Introduction}
For decades, stochastic policies have been used for privacy protection~\cite{warner1965randomized}. More recently, stochastic policies with provable privacy guarantees have been developed within differential privacy~\cite{dwork2014algorithmic, duchi2013local,kairouz2014extremal,machanavajjhala2008privacy} and information-theoretic privacy~\cite{farokhi2016privacy,wainwright2012privacy, liang2009information,lai2011privacy,li2015privacy, farokhi2018fisher,sankar2013utility}. Differential privacy uses randomization to ensure that the statistics of the reported outputs do not change noticeably by variations in an individual entry of the dataset. This can be ensured by the use of additive Laplace or Gaussian noise
with a scale proportional to the sensitivity of the reports on a private dataset with respect to the individual entries of the dataset.  Information-theoretic privacy, dating back to the secrecy problem~\cite{6772207}, emphasizes on masking or equivocating of information from the intended primary receiver or a secondary receiver with as much information as the primary receiver (e.g., an eavesdropper)~\cite{sankar2013utility,courtade2012information,yamamoto1983source, yamamoto1988rate} while providing guarantees on utility by bounding distortion, i.e., the distance between obfuscated and original reports.   

Although the above-mentioned stochastic policies provide provable privacy guarantees, many organizations still use deterministic heuristic-based privacy-preserving methods, such as $k$-anonymity~\cite{samarati2001protecting,sweeney2002k} and $\ell$-diversity~\cite{1617392}. For instance, anonymization is frequently used by governments\footnote{See \url{https://data.gov.au} for an example of government initiative.} or companies alike for releasing private data\footnote{See \url{https://www.kaggle.com} for examples of data from private companies and individuals.} to the broader public for analysis even though it is proved to be insufficient for privacy preservation~\cite{narayanan2008robust,su2017anonymizing, de2013unique}. Other policies, such as $k$-anonymity, are also vulnerable to attacks~\cite{1617392}. 

The popularity of non-stochastic/deterministic privacy-preserving policies is perhaps caused by factors, such as undesirable properties of differentially-private additive noise especially the Laplace noise~\cite{bambauer2013fool,farokhi2016optimal}, simplicity of implementing deterministic policies (in the sense of not requiring expertise in probability theory)~\cite{garfinkel2018issues}, and generation of unreasonable/unrealistic outputs by the use of randomness~\cite{bild2018safepub, bhaskar2011noiseless,nabar2006towards}. The guarantees of information-theoretic privacy policies are also presented in the form of averages, i.e., they only bound the average amount of leaked information. Information-theoretic policies also require the knowledge of probability distributions of datasets, which might not be available at the time of design or might change across time. This motivates the need for better understanding of non-stochastic privacy policies. 

The lack of systematic methods for developing or assessing deterministic privacy-preserving policies is due to the lack of privacy measures for deterministic policies on deterministic datasets. This makes proving  privacy guarantees for deterministic policies, in any sense, even if weak or limited in scope or practice, impossible. Recently, the theory of non-stochastic information theory~\cite{hartley1928transmission, kolmogorov1959varepsilon,renyi1961measures, nair2013nonstochastic,jagerman1969varepsilon,nair2012nonstochastic, duan2015transfer,wiese2016uncertain} was used to develop a deterministic measure of privacy~\cite{8662687}. The measure was successfully utilized to show that binning, a popular deterministic policy for privacy preservation, provides some guarantees, and to prove that $k$-anonymity is in fact \textit{not} privacy preserving, without resorting to extensive simulations and numerical studies (which are only sufficient and not necessary in analysis of general policies). The privacy measure in~\cite{8662687} is perfect for providing protections against generic adversaries; however, in some instances more might be known about the privacy-intrusive adversaries, hence the privacy measure can be further refined.

A category of adversaries studied in privacy literature is the category of hypothesis testing adversaries~\cite{barber2014privacy,li2015privacy, kairouz2014extremal}. The adversary, here, is interested in examining the validity of a hypothesis, e.g., if a house is occupied or if an individual has a certain disease based on some reports. The privacy risk, in this case, is often 
measured by the minimum error probability of the adversary. In this paper, we expand this analysis to a non-stochastic framework. To do so, we develop a theory of non-stochastic hypothesis testing by borrowing the concept of uncertain variables from 
non-stochastic information theory. Uncertain variables only consider support
sets and do not assign distributions/measures to variables. In non-stochastic hypothesis theory, we define tests as binary-valued functions on uncertain variables. We prove a fundamental bound for the best  performance of tests. This bound is used to develop a measure of privacy. We then construct reporting functions with given privacy and utility guarantees. We illustrate the outcomes of using such privacy-preserving polices on a practical dataset.

The rest of the paper organized as follows. We present the non-stochastic hypothesis testing framework in Section~\ref{sec:nonstochasticht}. In Section~\ref{sec:privacy}, we investigate privacy against hypothesis-testing adversaries. Finally, we present numerical results in Section~\ref{sec:numerical} and conclude the paper in Section~\ref{sec:conclusions}.

\section{Non-Stochastic Hypothesis Testing}
\label{sec:nonstochasticht}
In this section, we develop a framework for non-stochastic hypothesis testing starting by introducing the notion of uncertain variables.
\subsection{Uncertain Variables}
Consider $\Omega$ whose elements $\omega\in \Omega$ are samples. These elements the
source of uncertainty. An uncertain variable is a mapping defined over $\Omega$,
such as $X:\Omega\rightarrow\mathbb{X}$ with $X(\omega)$ denoting a realization of
the uncertain variable. When the dependence of the uncertain variable, u.v., to the sample
is evident from the context, $X(\omega)$ is replaced by $X$. In this paper, we
restrict ourselves to real-valued uncertain variables, e.g., $\mathbb{X}\subseteq\mathbb{R}^{n_x}$ for some integer $n_x\geq 0$. \textit{Marginal range} of any
uncertain variable $X$ is $\range{X}:=\{X(\omega):\omega\in\Omega \}\subseteq
\mathbb{X}$,  \textit{joint range} of two uncertain variables $X:\Omega\rightarrow\mathbb{X}$
and $Y:\Omega\rightarrow\mathbb{Y}$ is $\range{X,Y}:=\{(X(\omega), Y(\omega)):
\omega\in\Omega \}\subseteq \mathbb{X}\times \mathbb{Y}$, and  \textit{conditional range} of
$X$, conditioned on realization of uncertain variable $Y(\omega)=y$, is
$\range{X|y}:=\{X(\omega):\exists \omega\in\Omega  \mbox{ such that } Y(\omega)=y\}
\subseteq \range{X}.$ Uncertain variables $(X_i)_{i=1}^n$ are \textit{unrelated}
if $\range{X_1,\dots,X_n}=\range{X_1}\times \cdots \times \range{X_n}.$ Further,
they are conditionally unrelated, conditioned on uncertain variable $Y$, if $\range{X_1,\dots,X_n|y}
=\range{X_1|y}\times \cdots \times \range{X_n|y}$ for all $y\in\range{Y}$.
For two uncertain variables, $X_1$ and $X_2$ are unrelated if
 $\range{X_1|x_2}=\range{X_1}, \forall x_2\in
\range{X_2}$. 

An uncertain variable $X$ for which $\range{X}$ is  uncountably infinite is a \textit{continuous} uncertain variable, similar to a continuous random variable. An uncertain variable $X$ for which $\range{X}$ is finite is a \textit{discrete} uncertain variable. Non-stochastic entropy of a continuous uncertain variable $X$ can be defined as
\begin{align} \label{eqn:firstentropys}
h_0(X):=\log(\mu(\range{X}))\in \mathbb{R}\cup\{\pm \infty\},
\end{align}
where $\mu$ is the Lebesgue measure. The logarithm can be taken in any basis; the logarithm is in the natural basis in this paper in line with the literature on differential entropy of continuous random variables. The non-stochastic entropy in~\eqref{eqn:firstentropys} is sometimes referred to as R\'{e}nyi differential $0$-entropy~\cite{nair2013nonstochastic}. Non-stochastic entropy of a discrete uncertain variable $X$ can be defined as
\begin{align} \label{eqn:firstentropys_discrete}
H_0(X):=\log(|\range{X}|)\in \mathbb{R},
\end{align}
where $|\cdot|$ is the cardinality of a set. In this paper, for discrete uncertain variables, in line with the literature on entropy of discrete random variables, the logarithm is in the basis of two.

\begin{figure}[t]
\centering
\begin{tikzpicture}
\node[] at (-3.0,-0.3) { \small
\begin{minipage}{2cm}
\centering
hidden \\ uncertainity \\ $\omega$
\end{minipage}
};
\node[] (X) at (0,0) {$X$};
\node[] (Y) at (2,0) {$Y$};
\node[] (H) at (2,-1) {$H$};
\draw[->] (X) -- (Y);
\draw[->] (X) -- (H);
\draw[->] (-2.0,+0.0) -- (X);
\end{tikzpicture}
\caption{\label{fig:uncertainvariable} Relationship between uncertain variables in non-stochastic hypothesis testing based on uncertain measurements.}
\end{figure}
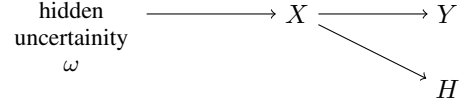

\subsection{Hypothesis Testing Based on Uncertain Measurements}
Consider three uncertain variables in Figure~\ref{fig:uncertainvariable}. Uncertain variable $X$ denotes an original uncertain variable. We have access to an \textit{uncertain measurement} of this variable denoted by $Y$. This is captured by that $Y=g_Y(X)$ for a mapping $g_Y:\range{X}\rightarrow \range{Y}$. Using the definition of unrelated variables, the relationship can be expressed by that, for any uncertain variable $Z$, $Y$ and $Z$ are conditionally unrelated, conditioned on uncertain variable $X$. Recalling that uncertain variables are mappings from the sample space, it must be that $Y=g_Y\circ X$. Similarly, we may define the \textit{hypothesis} as an uncertain variable $H$ with binary range $\range{H}=\{p_0,p_1\}$, where $p_0$ denotes the \textit{null hypothesis} and $p_1$ denotes the \textit{alternative hypothesis}. We assume that there exists a mapping $g_H:\range{X}\rightarrow\range{H}$ such that  $H=g_H\circ X$; the hypothesis is constructed based on the uncertain variable $X$ as $H=g_H(X)$.

A \textit{test} is a function $T:\range{Y}\rightarrow\range{H}=\{p_0,p_1\}$. If $T(Y)=p_1$, the test rejects the null hypothesis in favour of the alternative hypothesis; however, if $T(Y)=p_0$, the test accepts the null hypothesis. The set of all tests is given by $\range{H}^{\range{Y}}$, which captures the set of all functions from $\range{Y}$ to $\range{H}$.

Consider $y\in\range{Y}$ such that $T(y)=p_0$; the null hypothesis is accepted. The realization of output $Y(\omega)=y$ may correspond to many realizations of uncertain variable $X$, i.e., all the elements of the set $\range{X|y}$. We say that $T(y)=p_0$ is correct, or the test is correct for the output realization $Y(\omega)=y$, if $g_H(x)=p_0$ for all $x\in\range{X|y}$, i.e., all realizations of uncertain variable $X$ compatible with $y$ that are also compatible with the null hypothesis. The same also holds for the alternative hypothesis. In following definition, we use the notation that $\range{H|\range{X|y}}:=\{h\in\range{H|x}:x\in \range{X|y}\}=\cup_{x\in \range{X|y}}\range{H|x}$. 

\begin{definition}[Correctness] A test $T\in \range{H}^{\range{Y}}$ is correct at $y\in\range{Y}$ if $\range{H|\range{X|y}}=\{T(y)\}$.  The set of all outputs for which the test is correct is  $\aleph(T):=\{y\in\range{Y}:\range{H|\range{X|y}}=\{T(y)\}\}$.
\end{definition}

Based on this definition of correctness, we can define a performance measure for a test:
\begin{align}
\mathcal{P}(T):=
\begin{cases}
\log(\mu(\aleph)), &  Y\mbox{ is a continuous u.v.}\\
\log(|\aleph|), &  Y\mbox{ is a discrete u.v.}
\end{cases}
\end{align}
We seek an optimal hypothesis test using the optimization problem in
\begin{align} \label{eqn:optimaltest}
T^*\in \argmax_{T\in \range{H}^{\range{Y}}}   \mathcal{P}(T).
\end{align}

\begin{definition}[Consistency] A test $T:\range{Y}\rightarrow\range{H}$ is consistent if (\textit{i}) $T(Y)=p_0$ only if $Y\in \range{Y|p_0}$ and (\textit{ii}) $T(Y)=p_1$ only if $Y\in \range{Y|p_1}$.
\end{definition}

In the following theorem, we prove that consistent tests are in fact optimal in the sense of $\mathcal{P}$. For any mapping $g:x\mapsto y$, we define the inverse image $g^{-1}(y):=\{x:g(x)=y\}$.

\begin{theorem} \textit{(Optimal Tests):} \label{tho:optimaltest1} Any consistent test is a solution of~\eqref{eqn:optimaltest}.  
\end{theorem}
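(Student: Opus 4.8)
The plan is to show that for \emph{every} test the correctness set $\aleph(T)$ is contained in one fixed set $U\subseteq\range{Y}$ that does not depend on $T$, that this containment becomes equality precisely when $T$ takes the ``obvious'' value at each point of $U$, and finally that consistency forces exactly this behaviour, so that a consistent test attains the common upper bound and is therefore a maximizer of $\mathcal{P}$.

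First I would simplify the correctness condition using the hypothesis structure $H=g_H\circ X$. For each $x\in\range{X}$ the conditional range $\range{H|x}$ is the singleton $\{g_H(x)\}$, so that $\range{H|\range{X|y}}=\{g_H(x):x\in\range{X|y}\}=g_H(\range{X|y})$. Hence $T$ is correct at $y$ if and only if $g_H(\range{X|y})$ is a singleton and $T(y)$ equals its unique element. Writing $U:=\{y\in\range{Y}:|g_H(\range{X|y})|=1\}$ and letting $h(y)$ denote that unique element for $y\in U$, this already gives $\aleph(T)\subseteq U$ for every test, together with the explicit description $\aleph(T)=\{y\in U:T(y)=h(y)\}$. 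Consequently $\mathcal{P}(T)\le\log(\mu(U))$ in the continuous case and $\mathcal{P}(T)\le\log(|U|)$ in the discrete case, with equality holding if and only if $T(y)=h(y)$ for all $y\in U$. This reduces the whole theorem to showing that consistent tests satisfy $T(y)=h(y)$ on $U$.

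Next I would translate the two conditional ranges appearing in the definition of consistency into statements about $g_H(\range{X|y})$. Unwinding the definition of the conditional range, $y\in\range{Y|p_i}$ holds exactly when some $x\in\range{X|y}$ satisfies $g_H(x)=p_i$, i.e.\ when $p_i\in g_H(\range{X|y})$. Since $g_H(\range{X|y})$ is a nonempty subset of $\{p_0,p_1\}$, this shows that $U$ is precisely the set of outputs lying in exactly one of $\range{Y|p_0}$, $\range{Y|p_1}$; moreover $h(y)=p_0$ on $\range{Y|p_0}\setminus\range{Y|p_1}$ and $h(y)=p_1$ on $\range{Y|p_1}\setminus\range{Y|p_0}$.

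Finally I would combine these observations. Take a consistent test $T$ and any $y\in U$. If $h(y)=p_0$ then $y\notin\range{Y|p_1}$, so the contrapositive of consistency condition~(\textit{ii}) forces $T(y)\neq p_1$, hence $T(y)=p_0=h(y)$; the case $h(y)=p_1$ is symmetric using condition~(\textit{i}). Therefore $T(y)=h(y)$ on all of $U$, so $\aleph(T)=U$ and $T$ attains the upper bound, which proves it is a solution of~\eqref{eqn:optimaltest}. The step that needs the most care is the equivalence $y\in\range{Y|p_i}\iff p_i\in g_H(\range{X|y})$: this is where the measurement structure $Y=g_Y\circ X$ and the hypothesis structure $H=g_H\circ X$ must be pushed through the definition of the conditional range, and getting the direction of the contrapositives right there is exactly what matches consistency to correctness.
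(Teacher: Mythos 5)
Your proposal is correct and follows essentially the same route as the paper: your equivalence $y\in\range{Y|p_i}\iff p_i\in g_H(\range{X|y})$ is exactly the content of the paper's Claims~1--3, your set $U$ is the symmetric difference $\range{Y|p_0}\Delta\range{Y|p_1}$, and both arguments conclude by showing every test's correctness set lies in this fixed set while consistency forces equality. The only difference is presentational (you introduce $U$ and $h$ abstractly before identifying them), not mathematical.
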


\begin{proof} First, we proved three important claims. 
\par \textit{Claim~1}: $\range{H|\range{X|y}}=\{p_0,p_1\}$ for any $y\in\range{Y|p_1}\cap \range{Y|p_0}$.
\par The proof for this claim is as follows. For any $y\in\range{Y|p_i}$, there exists $x\in g_Y^{-1}(y)=\range{X|y}$ such that $g_H(x)=p_i$.  Therefore, $\{p_i\}\subseteq g_H(\range{X|y})=\range{H|\range{X|y}}$. This implies that, for any $y\in\range{Y|p_1}\cap \range{Y|p_0}$, $\{p_0,p_1\}\subseteq \range{H|\range{X|y}}\subseteq \range{H}=\{p_0,p_1\}$. 
\par \textit{Claim~2}: $\range{H|\range{X|y}}=\{p_0\}$ for any $y\in\range{Y|p_0}\setminus\range{Y|p_1}$.
\par The proof for this claim is as follows. If $y\notin\range{Y|p_1}$, there must not exist $x\in \range{X|y}$ such that $g_H(x)=p_1$. Therefore, $p_1\notin\range{H|\range{X|y}}$. Therefore, for any $y\in\range{Y|p_0}\setminus\range{Y|p_0}$, it must be that $p_0\in\range{H|\range{X|y}}$ and $p_1\notin\range{H|\range{X|y}}$. This is only possible if $\range{H|\range{X|y}}=\{p_0\}$. 
\par \textit{Claim~3}: $\range{H|\range{X|y}}=\{p_1\}$ for any $y\in\range{Y|p_1}\setminus\range{Y|p_0}$.
\par The proof for this claim is the same as \textit{Claim~2}. 

With these claims in hand, we are ready to prove the lemma. Note that
\begin{align*}
\aleph=&\{y\in\range{Y}:\range{H|\range{X|y}}=\{T(y)\}\}\\
=&\{y\in\range{Y|p_0}\setminus\range{Y|p_1}:\range{H|\range{X|y}}=\{T(y)\}\}\\
&\cup \{y\in\range{Y|p_0}\cap \range{Y|p_1}:\range{H|\range{X|y}}=\{T(y)\}\}\\
&\cup \{y\in\range{Y|p_1}\setminus\range{Y|p_0}:\range{H|\range{X|y}}=\{T(y)\}\}\\
=&\{y\in\range{Y|p_0}\setminus\range{Y|p_1}:\range{H|\range{X|y}}=\{T(y)\}\}\\
&\cup \{y\in\range{Y|p_1}\setminus\range{Y|p_0}:\range{H|\range{X|y}}=\{T(y)\}\},
\end{align*}
where the last equality follows from that, by \textit{Claim~1}, $\range{H|\range{X|y}}=\{p_0,p_1\}$ while $\{T(y)\}$  can be either $\{p_0\}$ or $\{p_1\}$. This shows that
\begin{align*}
\aleph\subseteq (\range{Y|p_0}\setminus\range{Y|p_1})\cup (\range{Y|p_1}\setminus\range{Y|p_0}),
\end{align*}
and as a result
\begin{align*}
\mathcal{P}(T)\leq
\begin{cases}
\log(\mu(\range{Y|p_0}\Delta\range{Y|p_1})), & Y\mbox{ is a continuous u.v.},\\
\log(|\range{Y|p_0}\Delta\range{Y|p_1}|), &Y\mbox{ is a discrete u.v.},
\end{cases}
\end{align*}
where $\Delta$ denotes the symmetric difference operator on the sets. By the definition of consistent tests, we can see that
\begin{align*}
\aleph
=&\{y\in\range{Y|p_0}\setminus\range{Y|p_1}:\range{H|\range{X|y}}=\{p_0\}\}\\
&\cup \{y\in\range{Y|p_1}\setminus\range{Y|p_0}:\range{H|\range{X|y}}=\{p_1\}\},
\end{align*}
Finally,
\begin{align*}
\aleph&=(\range{Y|p_0}\setminus\range{Y|p_1})\cup (\range{Y|p_1}\setminus\range{Y|p_0}),
\end{align*}
because of \textit{Claims~2--3}. This shows that consistent tests attain the upper bound on the performance. 
\end{proof}

Note that if the realization of the lossy/uncertain measurement $Y$ belongs to $\range{Y|p_0}\cap\range{Y|p_1}$, there is not enough evidence to accept or reject the null hypothesis or the alternative hypothesis. However, if the realization of the measurement  $Y$ belongs to $(\range{Y|p_0}\setminus\range{Y|p_1})\cup(\range{Y|p_1}\setminus\range{Y|h_2})=\range{Y|p_0}\Delta\range{Y|p_1}$, with $\Delta$ denoting the symmetric difference operator on the sets, we can confidently reject or accept the null hypothesis or the alternative hypothesis. This fact is used by the consistent tests to achieve the highest performance.

\begin{theorem}\textit{(Performance Bound):}\label{tho:bound}  The performance of any test $T\in \range{H}^{\range{Y}}$ is upper bounded as
\begin{align*}
\mathcal{P}(T)\leq
\begin{cases}
\log(\mu(\range{Y|p_0}\Delta\range{Y|p_1})),& \hspace{-.04in}Y\mbox{ is a continuous u.v.}\\
\log(|\range{Y|p_0}\Delta\range{Y|p_1}|), & \hspace{-0.04in}Y\mbox{ is a discrete u.v.}
\end{cases}
\end{align*}
\end{theorem}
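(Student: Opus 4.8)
The plan is to show that for \emph{any} test $T\in\range{H}^{\range{Y}}$, the set $\aleph(T)$ of outputs at which $T$ is correct is forced to lie inside the symmetric difference $\range{Y|p_0}\Delta\range{Y|p_1}$; the bound then follows immediately from monotonicity of the Lebesgue measure (respectively cardinality) and of the logarithm. In fact, this containment is exactly the inequality half already established inside the proof of Theorem~\ref{tho:optimaltest1}, so the present theorem is essentially its corollary.

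First I would partition $\range{Y}$ into the three disjoint regions $\range{Y|p_0}\setminus\range{Y|p_1}$, $\range{Y|p_1}\setminus\range{Y|p_0}$, and $\range{Y|p_0}\cap\range{Y|p_1}$, and characterize $\range{H|\range{X|y}}$ on each. The key elementary observation is that $y\in\range{Y|p_i}$ forces the existence of some $x\in\range{X|y}$ with $g_H(x)=p_i$, whence $p_i\in\range{H|\range{X|y}}$. Reading this off on each region gives Claims~1--3 of Theorem~\ref{tho:optimaltest1}: on the overlap $\range{H|\range{X|y}}=\{p_0,p_1\}$, while on each one-sided region it equals the appropriate singleton.

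The crucial step is the overlap region, and it is the only place where any thought is required. Correctness of $T$ at $y$ demands $\range{H|\range{X|y}}=\{T(y)\}$, a \emph{singleton}; but on $\range{Y|p_0}\cap\range{Y|p_1}$ we have $\range{H|\range{X|y}}=\{p_0,p_1\}$, which is never a singleton. Hence no $y$ in the overlap can belong to $\aleph(T)$, \emph{regardless of the choice of} $T$, and therefore $\aleph(T)\subseteq(\range{Y|p_0}\setminus\range{Y|p_1})\cup(\range{Y|p_1}\setminus\range{Y|p_0})=\range{Y|p_0}\Delta\range{Y|p_1}$.

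Finally, applying $\mu$ in the continuous case (or $|\cdot|$ in the discrete case) and then $\log$ to this set inclusion yields the stated bound on $\mathcal{P}(T)$. I do not expect any genuine obstacle: the whole argument is set-theoretic, and the single point to be careful about is that the exclusion of the overlap rests purely on the cardinality mismatch between $\{p_0,p_1\}$ and $\{T(y)\}$, so it applies uniformly to every test and not merely to consistent ones.
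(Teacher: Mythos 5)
Your proposal is correct and follows essentially the same route as the paper: the paper proves Theorem~\ref{tho:bound} by pointing back to the proof of Theorem~\ref{tho:optimaltest1}, where exactly your argument appears --- the three-way decomposition of $\range{Y}$, the observation that $y\in\range{Y|p_i}$ forces $p_i\in\range{H|\range{X|y}}$ so the overlap region yields $\{p_0,p_1\}$ and can never be in $\aleph(T)$, hence $\aleph(T)\subseteq\range{Y|p_0}\Delta\range{Y|p_1}$ for every test. Your explicit remark that the exclusion of the overlap depends only on the cardinality mismatch, and therefore applies to arbitrary (not just consistent) tests, is precisely the point that makes the bound universal.
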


\begin{proof} The upper bound in the statement of the theorem follows from the proof of Theorem~\ref{tho:optimaltest1}. 
\end{proof}

Theorem~\ref{tho:bound} can be seen as a non-stochastic equivalent of Chernoff-Stein Lemma (see, e.g.,~\cite[Ch.\,11]{cover2012elements} for randomized hypothesis testing). Note that $\log(\mu(\range{X|p_0}\Delta\range{X|p_1}))$ essentially captures the difference between the ranges $\range{X|p_0}$ and $\range{X|p_1}$ resembling the Kullback--Leibler divergence in a non-stochastic framework. A same interpretation can also be provided for $\log(|\range{X|p_0}\Delta\range{X|p_1}|)$. 

\begin{example}\textit{(Hypothesis testing using noisy measurements): } Consider an uncertain variable $X=(X_1,X_2)\in[100,250]\times[-10,10]$, where $X_1$ denotes the height of an individual in centimetres and $X_2$ denotes a measurement error in centimetres. Let the uncertain measurement to be $Y=g_Y(X)=X_1+X_2$. Further, the hypothesis uncertain variable is defined as $H=g_H(X)=p_0\mathds{1}_{X_1\leq 150}+p_1\mathds{1}_{X_1>150}$. The null hypothesis $p_0$ is that the individual's is short (shorter than or equal to 150 centimetres) and the alternative hypothesis is that the individual is tall (taller than 150 centimetres). Now, note that
\begin{align*}
\range{Y|p_0}
&=\{X_1+X_2:100\leq X_1\leq 150,X_2\in[-10,10]\}\\
&=[90,160],\\
\range{Y|p_1}
&=\{X_1+X_2:150\leq X_1\leq 250,X_2\in[-10,10]\}\\
&=[140,260].
\end{align*}
Thus $\range{Y|p_0}\cap \range{Y|p_1}=[140,160].$ Let $T$ be a test such that $T(Y)=p_0$ if $Y\in [90,150]$ and $T(Y)=p_1$ if $Y\in(150,260]$.  Evidently, $T$ is a consistent test. We get
\begin{align*}
\mathcal{P}(T)
=&\log(\mu(\range{Y|p_0}\Delta \range{Y|p_1}))\\
=&\log(\mu([90,140)\cup(160,260]))\\
=&\log(150).
\end{align*}
Further, note that $h_0(Y)=\log(170)$, if we scale the performance by $h_0(Y)$, we get
\begin{align*}
\mathcal{P}(T)-h_0(Y)
=&\log(150)-\log(170)\approx -0.1251.
\end{align*}
Now imagine the case where $X=(X_1,X_2)\in[100,250]\times[-20,20]$ with the interpretation that the amount of the additive uncertain measurement noise is twice larger. In this case, we have
\begin{align*}
\mathcal{P}(T)-h_0(Y)=\log(150)-\log(190)\approx-0.2364.
\end{align*}
This shows that, by increasing the amount of the noise, the confidence of the test is reduced, which is in line with our expectation. 
\end{example}

\begin{figure}[t]
\hspace{-.1in}
\begin{tikzpicture}
\node[double=black,draw,rectangle,minimum height=.8cm] (R)  at (+1.5,+0.0) {Adversary};
\node[double=black,draw,rectangle,minimum height=.8cm] (S)  at (-1.0,+0.0) {Sender};
\node[] at (-4.0,+0.0) { \small
\begin{minipage}{2cm}
\centering
hidden \\ uncertainity
\end{minipage}
};
\path[every node/.style={font=\sffamily\small}]
			(S)  		 edge [->,double=black] node[above] {$Y$} 		(R)
			(R)  		 edge [->,double=black] node[above] {$T(X)$}    (+3.5,+0.0)
  (-3,+0.0)  		 edge [->,double=black] node[above] {$H,X$} 	    (S);
\end{tikzpicture}
\caption{\label{fig:diagram0} Communication structure between a sender and a hypothesis-testing adversary.}
\end{figure}
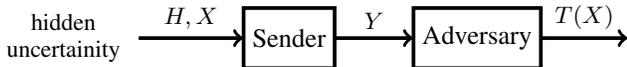

\section{Non-Stochastic Privacy Against Hypothesis-Testing Adversary}
\label{sec:privacy}
Consider the communication diagram in Figure~\ref{fig:diagram0} between a sender  and an adversary. The adversary's ultimate aim is to accurately test a hypothesis $H$ based on the communicated information from the sender $Y$. The sender wants to provide a message $Y$ that is as close as possible to $X$ while making the adversary's task in testing the validity of hypothesis $H$ hard. The policy of the sender is captured by the mapping from $X$ to $Y$, denoted by $g_Y$. We use the performance of the adversary in testing the private hypothesis based on the reported output $Y$ to define a measure of privacy as
\begin{align}
\mathrm{Priv}(g_Y):=
\begin{cases}
h_0(Y)-\log(\mu(\range{Y|p_0}\Delta\range{Y|p_1})), &\\
& \hspace{-1in}Y\mbox{ is a continuous u.v.}\\
h_0(Y)-\log(|\range{Y|p_0}\Delta\range{Y|p_1})), \\
& \hspace{-1in}Y\mbox{ is a discrete u.v.}
\end{cases}
\end{align}
Note that increasing $\mathrm{Priv}(g_Y)$ implies that the size of $\range{Y|p_0}\Delta\range{Y|p_1}$ is decreased, thus degrading the performance of any test employed by the adversary in light of Theorem~\ref{tho:bound}. 

\begin{definition}[$\epsilon$-privacy] Policy $g_Y$ is $\epsilon$-private for some $\epsilon\in(0,1]$ if  $\mathrm{Priv}(g_Y)\geq \log(\epsilon)$. 
\end{definition}

We need to balance privacy with utility, otherwise the best policy is to report nothing. Therefore, we need to define a measure of accuracy to balance against the privacy.

\begin{definition}[$\rho$-accuracy] Policy $g_Y$ is $\rho$-accurate for some $\rho\in(0,+\infty)$ if $\sup_{X\in\range{X}}\|X-g_Y(X)\|\leq 1/\rho$. 
\end{definition}

Increasing $\rho$ in $\rho$-accuracy implies that $\sup_{X\in\range{X}}\|X-Y\|$ is decreased, thus improving the quality of the reported output $Y$ by enforcing it to stay consistently closer to $X$. 

\begin{theorem} \label{tho:privacy_versus_utility} Assume that $\range{X}\subseteq\mathbb{R}^{n_x}$, and $g:\mathbb{R}^{n_x-1}\rightarrow\mathbb{R}$ exists such that 
\begin{align*}
g_H(x)
=
\begin{cases}
p_0, & x_i-g(x_{-i})\geq 0,\\
p_1, & x_i-g(x_{-i})< 0,
\end{cases}
\end{align*}
where $x_{-i}=(x_j)_{j\neq i}$. Let
\begin{align*}
g_Y(x)
=
\begin{cases}
(g(x_{-i}),x_{-i}), & \displaystyle  g(x_{-i})-\frac{1}{\rho}\leq x_i\leq g(x_{-i})+\frac{1}{\rho},\\
x, & \mbox{otherwise},
\end{cases}
\end{align*}
and
\begin{align*}
\epsilon
=\frac{\displaystyle\mu\left(\range{X}\cap \left\{x:g(x_{-i})-\frac{1}{\rho}\leq x_i\leq g(x_{-i})+\frac{1}{\rho}\right\}\right)}{\exp(h_0(X))}.
\end{align*}
Then, $g_Y$ is $\rho$-accurate and $\epsilon$-private.
\end{theorem}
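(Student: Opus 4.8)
The plan is to establish the two guarantees separately, since $\rho$-accuracy concerns only the pointwise distortion of $g_Y$ while $\epsilon$-privacy concerns the induced conditional ranges of the reported variable $Y=g_Y(X)$. I would dispatch accuracy first, as it follows directly from the definition of $g_Y$ by a case split, and then turn to the range computations needed for privacy.

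For $\rho$-accuracy I would bound $\|x-g_Y(x)\|$ on each branch of $g_Y$. Off the band, $g_Y(x)=x$, so the distortion is $0$. On the band, $g_Y(x)=(g(x_{-i}),x_{-i})$ agrees with $x$ in every coordinate except the $i$-th, where $x_i$ is replaced by $g(x_{-i})$; hence $\|x-g_Y(x)\|=|x_i-g(x_{-i})|$, and the defining inequality of the band $g(x_{-i})-1/\rho\leq x_i\leq g(x_{-i})+1/\rho$ gives $|x_i-g(x_{-i})|\leq 1/\rho$. Taking the supremum over $x\in\range{X}$ yields $\sup_{x\in\range{X}}\|x-g_Y(x)\|\leq 1/\rho$, i.e.\ $\rho$-accuracy.

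For $\epsilon$-privacy I would first describe the two conditional ranges. Writing $B=\{x:g(x_{-i})-1/\rho\leq x_i\leq g(x_{-i})+1/\rho\}$ for the band around the decision surface, and $A_0=\{x\in\range{X}:x_i\geq g(x_{-i})\}$, $A_1=\{x\in\range{X}:x_i<g(x_{-i})\}$ for the preimages $g_H^{-1}(p_0)$ and $g_H^{-1}(p_1)$, we have $\range{Y|p_0}=g_Y(A_0)$ and $\range{Y|p_1}=g_Y(A_1)$. The crucial observation is the role of the band: every point of $A_0\cap B$ and of $A_1\cap B$ is mapped onto the common boundary surface $\{z:z_i=g(z_{-i})\}$, so the collapsed band sits inside $\range{Y|p_0}\cap\range{Y|p_1}$ and is therefore removed from $\range{Y|p_0}\Delta\range{Y|p_1}$. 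Off the band, $A_0$ and $A_1$ are reported unchanged into the disjoint sets $\{x\in\range{X}:x_i>g(x_{-i})+1/\rho\}$ and $\{x\in\range{X}:x_i<g(x_{-i})-1/\rho\}$, which are precisely the two halves of the symmetric difference. I would then read off $h_0(Y)=\log(\mu(\range{Y}))$ and $\mu(\range{Y|p_0}\Delta\range{Y|p_1}))$ from this decomposition, substitute into $\mathrm{Priv}(g_Y)=h_0(Y)-\log(\mu(\range{Y|p_0}\Delta\range{Y|p_1}))$, and compare against $\log\epsilon$ via $\exp(h_0(X))=\mu(\range{X})$.

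The main obstacle is the measure-theoretic bookkeeping of the collapse map rather than any single inequality. I would need to argue carefully that the image of $\range{X}\cap B$ contributes only the boundary surface $\{z:z_i=g(z_{-i})\}$, that the non-band images are genuinely disjoint and carry the full weight of $\range{Y|p_0}\Delta\range{Y|p_1}$, and to verify the edge effects where the band is only partially contained in a fibre of $\range{X}$ (there one half of the band may be absent, so a boundary point can land in only one conditional range, though this correction is negligible). Once these measures are pinned down, the target inequality $\mathrm{Priv}(g_Y)\geq\log\epsilon$ reduces to an elementary comparison between $\mu(\range{X})$ and $\mu(\range{X}\cap B)$; the containment $\range{Y|p_0}\Delta\range{Y|p_1}\subseteq\range{Y}$, together with $\epsilon=\mu(\range{X}\cap B)/\mu(\range{X})\in(0,1]$, guarantees that the stated lower bound is well posed and that the inequality closes.
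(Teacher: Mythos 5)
Your proposal is correct and follows essentially the same route as the paper: accuracy from the pointwise bound $\|x-g_Y(x)\|=|x_i-g(x_{-i})|\leq 1/\rho$ on the band (and $0$ off it), and privacy from the decomposition of $\range{Y}$ into $\range{Y|p_0}\Delta\range{Y|p_1}$ and $\range{Y|p_0}\cap\range{Y|p_1}$, which gives $\mu(\range{Y|p_0}\Delta\range{Y|p_1})\leq\mu(\range{Y})$ and hence $\mathrm{Priv}(g_Y)\geq 0\geq\log\epsilon$ since $\epsilon\in(0,1]$. Your write-up is in fact more explicit than the paper's two-line argument about the geometry of the collapsed band and the edge effects, but the underlying mechanism is identical.
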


\begin{proof} The proof for $\rho$-accuracy, with $\rho\in(0,+\infty)$, follows from that $\|X-g_Y(X)\|=|x_i-g(x_{-i})|\leq 1/\rho$. The proof for $\epsilon$-privacy follows from that, if $\range{Y}$, $(\range{Y|p_0}\Delta\range{Y|p_1})$, $(\range{Y|p_0}\cap\range{Y|p_1})$ are Lebesgue measurable, we get $\mu(\range{Y})
=\mu((\range{Y|p_0}\Delta\range{Y|p_1})\cup (\range{Y|p_0}\cap\range{Y|p_1}))=\mu(\range{Y|p_0}\Delta\range{Y|p_1})+\mu(\range{Y|p_0}\cap\range{Y|p_1})$
because $(\range{Y|p_0}\Delta\range{Y|p_1})\cup (\range{Y|p_0}\cap\range{Y|p_1})
=\range{Y|p_0}\cup\range{Y|p_1}
=\range{Y}.$
\end{proof}

For large enough $\rho$, it can be seen that
$\range{X}\cap \{x:g(x_{-i})-\rho^{-1}\leq x_i\leq g(x_{-i})+\rho^{-1}\}\approx\{x:g(x_{-i})-\rho^{-1}\leq x_i\leq g(x_{-i})+\rho^{-1}\},$ and, as a result, $\epsilon=\mathcal{O}(\rho^{-1})$. This implies that, for the policy in Theorem~\ref{tho:privacy_versus_utility}, we have
\begin{center}
\mybox{$\mbox{``privacy}\times \mbox{accuracy} = \mbox{constant''}.$}
\end{center}
With the theoretical results in hand, we can demonstrate the effects of using the policy in Theorem~\ref{tho:privacy_versus_utility} on a practical dataset in the next section.

\section{Numerical Results}
\label{sec:numerical}
In this subsection, we consider design of a privacy-preserving policy for reporting individuals height in centimetres and weight in kilograms publicly. We consider an adversary who is interested in identifying individuals passing the obesity threshold in terms of body mass index (BMI), e.g., an insurance agency may use publicly available data to increase premiums of obese people or deny them insurance. Therefore, there is a duty of care for releasing demographic data of individuals publicly.  By the definition of the U.S. Department of Health \& Human Services, a person, be it female or male, is considered obese if their BMI is greater than or equal to 30. 

\begin{figure}
\includegraphics[width=1\linewidth]{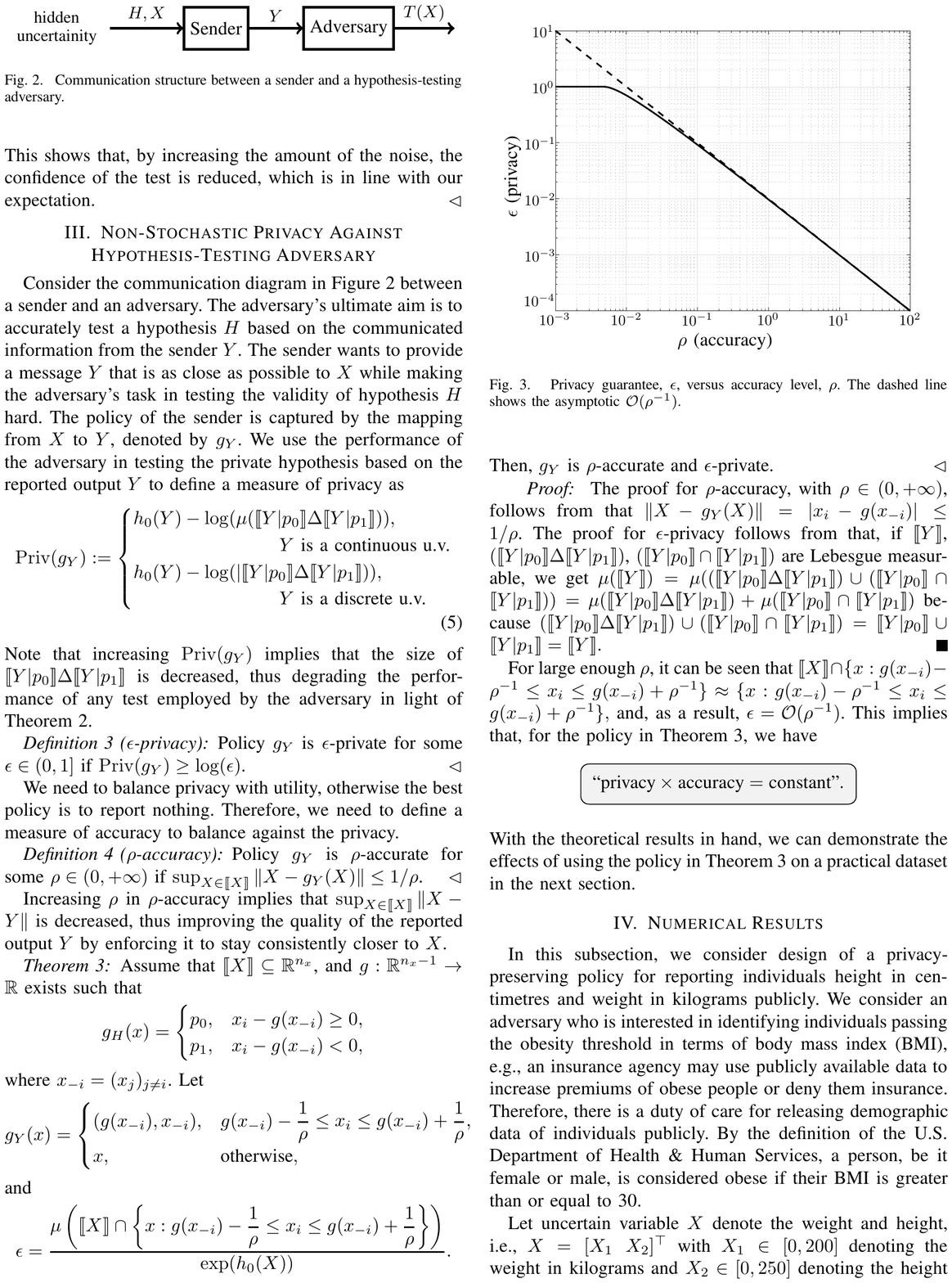}
\caption{
\label{fig:1}
Privacy guarantee, $\epsilon$, versus accuracy level, $\rho$. The dashed line shows the asymptotic $\mathcal{O}(\rho^{-1})$.
}
\end{figure}

\begin{figure*}
\centering
\includegraphics[width=1\linewidth]{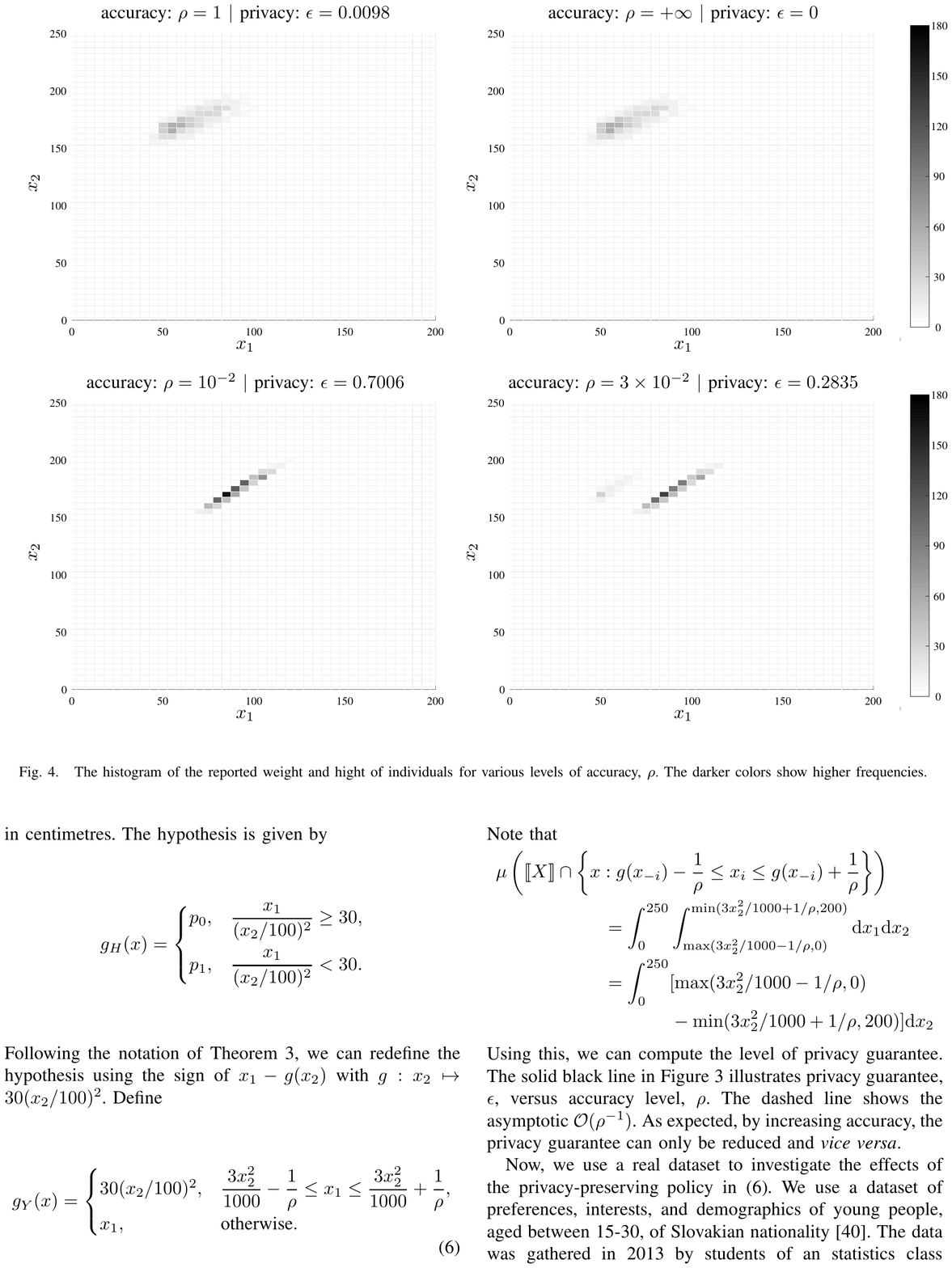}
\caption{
\label{fig:2}  
The histogram of the reported weight and hight of individuals for various levels of accuracy, $\rho$. The darker colors show higher frequencies.}
\end{figure*}

Let uncertain variable $X$ denote the weight and height, i.e., $X=[X_1\;X_2]^\top$ with $X_1\in[0,200]$ denoting the weight in kilograms and $X_2\in[0,250]$ denoting the height in centimetres. The hypothesis is given by
\begin{align*}
g_H(x)=
\begin{cases}
p_0, & \displaystyle \frac{x_1}{(x_2/100)^2}\geq  30,\\[1em]
p_1, & \displaystyle \frac{x_1}{(x_2/100)^2}< 30.
\end{cases}
\end{align*}
Following the notation of Theorem~\ref{tho:privacy_versus_utility}, we can redefine the hypothesis using the sign of $x_1-g(x_2)$ with $g:x_2\mapsto 30(x_2/100)^2$. Define
\begin{align} \label{eqn:bmiprivacypolicy}
g_Y(x)
=
\begin{cases}
30(x_2/100)^2, & \displaystyle  \frac{3x_2^2}{1000}-\frac{1}{\rho}\leq x_1 \leq \frac{3x_2^2}{1000}+\frac{1}{\rho},\\
x_1, & \mbox{otherwise}.
\end{cases}
\end{align}
Note that
\begin{align*}
&\mu\left(\range{X}\cap \left\{x:g(x_{-i})-\frac{1}{\rho}\leq x_i\leq g(x_{-i})+\frac{1}{\rho}\right\}\right)\\
&\hspace{.8in}=\int_{0}^{250} \int_{\max( 3x_2^2/1000-1/\rho,0)}^{\min(3x_2^2/1000+1/\rho,200)}
\mathrm{d}x_1\mathrm{d}x_2\\
&\hspace{.8in}=\int_{0}^{250} [\max( 3x_2^2/1000-1/\rho,0)\\
&\hspace{1.3in}-\min(3x_2^2/1000+1/\rho,200)]\mathrm{d}x_2
\end{align*}
Using this, we can compute the level of privacy guarantee. The solid black line in Figure~\ref{fig:1} illustrates privacy guarantee, $\epsilon$, versus accuracy level, $\rho$. The dashed line shows the asymptotic $\mathcal{O}(\rho^{-1})$. As expected, by increasing accuracy, the privacy guarantee can only be reduced and \textit{vice versa}.

\begin{figure}
\centering
\includegraphics[width=1\linewidth]{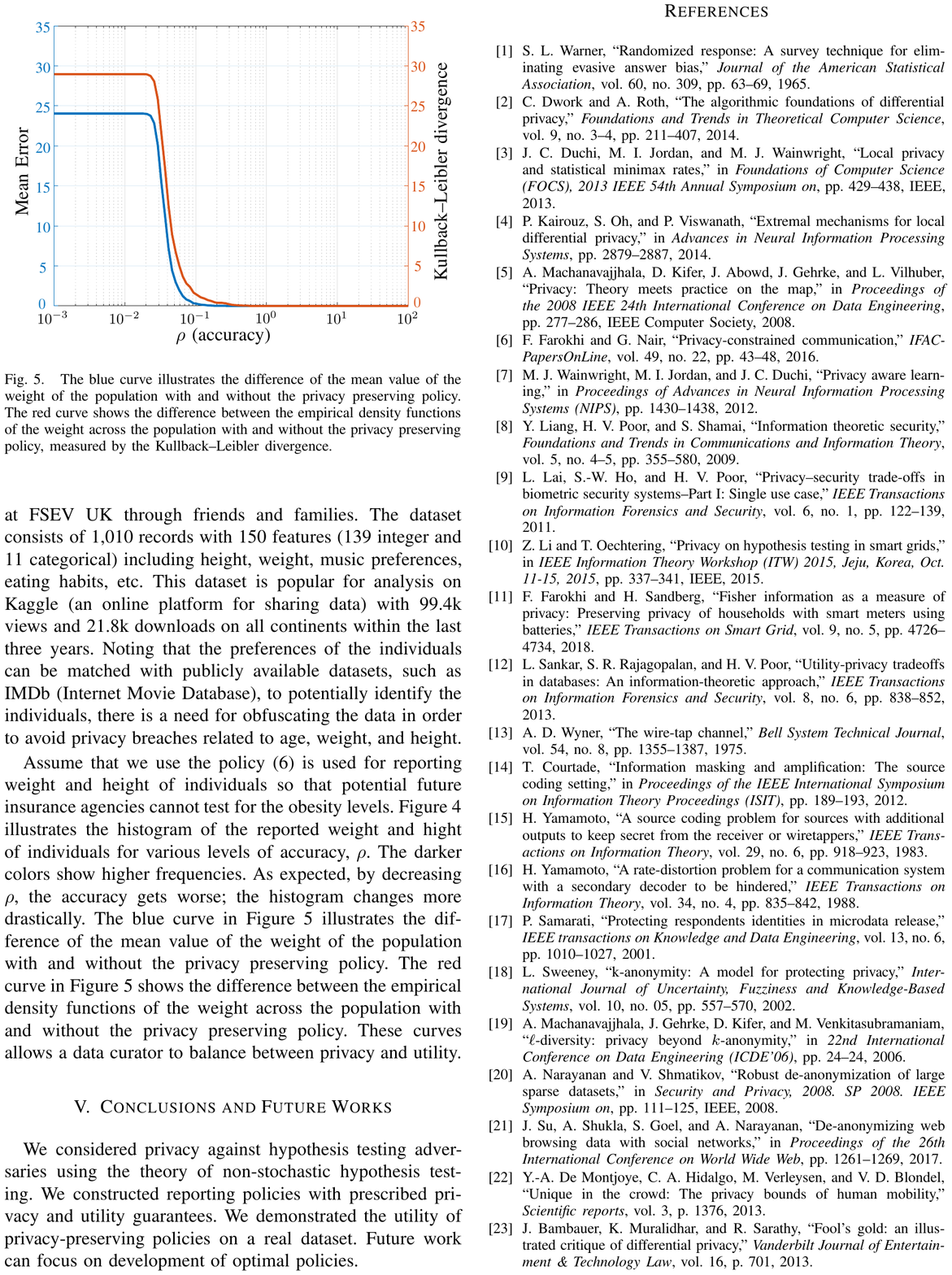}
\caption{
\label{fig:3}
The blue curve illustrates the difference of the mean value of the weight of the population with and without the privacy preserving policy. The red curve shows the difference between the empirical density functions of the weight across the population with and without the privacy preserving policy, measured by the Kullback–Leibler divergence.
}
\end{figure}

Now, we use a real dataset to investigate the effects of the privacy-preserving policy in~\eqref{eqn:bmiprivacypolicy}. We use a dataset of preferences, interests, and demographics of young people, aged between 15-30, of Slovakian nationality~\cite{kaggle_slovak}. The data was gathered in 2013 by students of an statistics class at FSEV UK through friends and families. The dataset consists of 1,010 records with 150 features (139 integer and 11 categorical) including height, weight, music preferences, eating habits, etc. This dataset is  popular for analysis on Kaggle (an online platform for sharing data) with 99.4k views and 21.8k downloads on all continents within the last three years. Noting that the preferences of the individuals can be matched with publicly available datasets, such as IMDb (Internet Movie Database), to potentially identify the individuals, there is a need for obfuscating the data in order to avoid privacy breaches related to age, weight, and height. 

Assume that we use the policy~\eqref{eqn:bmiprivacypolicy} is used for reporting weight and height of individuals so that potential future insurance agencies cannot test for the obesity levels. Figure~\ref{fig:2} illustrates the histogram of the reported weight and hight of individuals for various levels of accuracy, $\rho$. The darker colors show higher frequencies. As expected, by decreasing $\rho$, the accuracy gets worse; the histogram changes more drastically. The blue curve in Figure~\ref{fig:3} illustrates the difference of the mean value of the weight of the population with and without the privacy preserving policy. The red curve in Figure~\ref{fig:3} shows the difference between the empirical density functions of the weight across the population with and without the privacy preserving policy. These curves allows a data curator to balance between privacy and utility.

\section{Conclusions and Future Works}
\label{sec:conclusions}
We considered privacy against hypothesis testing adversaries using the theory of non-stochastic hypothesis testing. We constructed reporting policies with prescribed privacy and utility guarantees. We demonstrated the utility of privacy-preserving policies on a real dataset. Future work can focus on development of optimal policies.

\bibliographystyle{ieeetr}
\bibliography{scibib}

\begin{thebibliography}{10}

\bibitem{warner1965randomized}
S.~L. Warner, ``Randomized response: A survey technique for eliminating evasive
  answer bias,'' {\em Journal of the American Statistical Association},
  vol.~60, no.~309, pp.~63--69, 1965.

\bibitem{dwork2014algorithmic}
C.~Dwork and A.~Roth, ``The algorithmic foundations of differential privacy,''
  {\em Foundations and Trends in Theoretical Computer Science}, vol.~9,
  no.~3--4, pp.~211--407, 2014.

\bibitem{duchi2013local}
J.~C. Duchi, M.~I. Jordan, and M.~J. Wainwright, ``Local privacy and
  statistical minimax rates,'' in {\em Foundations of Computer Science (FOCS),
  2013 IEEE 54th Annual Symposium on}, pp.~429--438, IEEE, 2013.

\bibitem{kairouz2014extremal}
P.~Kairouz, S.~Oh, and P.~Viswanath, ``Extremal mechanisms for local
  differential privacy,'' in {\em Advances in Neural Information Processing
  Systems}, pp.~2879--2887, 2014.

\bibitem{machanavajjhala2008privacy}
A.~Machanavajjhala, D.~Kifer, J.~Abowd, J.~Gehrke, and L.~Vilhuber, ``Privacy:
  Theory meets practice on the map,'' in {\em Proceedings of the 2008 IEEE 24th
  International Conference on Data Engineering}, pp.~277--286, IEEE Computer
  Society, 2008.

\bibitem{farokhi2016privacy}
F.~Farokhi and G.~Nair, ``Privacy-constrained communication,'' {\em
  IFAC-PapersOnLine}, vol.~49, no.~22, pp.~43--48, 2016.

\bibitem{wainwright2012privacy}
M.~J. Wainwright, M.~I. Jordan, and J.~C. Duchi, ``Privacy aware learning,'' in
  {\em Proceedings of Advances in Neural Information Processing Systems
  (NIPS)}, pp.~1430--1438, 2012.

\bibitem{liang2009information}
Y.~Liang, H.~V. Poor, and S.~Shamai, ``Information theoretic security,'' {\em
  Foundations and Trends in Communications and Information Theory}, vol.~5,
  no.~4--5, pp.~355--580, 2009.

\bibitem{lai2011privacy}
L.~Lai, S.-W. Ho, and H.~V. Poor, ``Privacy--security trade-offs in biometric
  security systems--{Part I}: Single use case,'' {\em IEEE Transactions on
  Information Forensics and Security}, vol.~6, no.~1, pp.~122--139, 2011.

\bibitem{li2015privacy}
Z.~Li and T.~Oechtering, ``Privacy on hypothesis testing in smart grids,'' in
  {\em IEEE Information Theory Workshop (ITW) 2015, Jeju, Korea, Oct. 11-15,
  2015}, pp.~337--341, IEEE, 2015.

\bibitem{farokhi2018fisher}
F.~Farokhi and H.~Sandberg, ``Fisher information as a measure of privacy:
  Preserving privacy of households with smart meters using batteries,'' {\em
  IEEE Transactions on Smart Grid}, vol.~9, no.~5, pp.~4726--4734, 2018.

\bibitem{sankar2013utility}
L.~Sankar, S.~R. Rajagopalan, and H.~V. Poor, ``Utility-privacy tradeoffs in
  databases: An information-theoretic approach,'' {\em IEEE Transactions on
  Information Forensics and Security}, vol.~8, no.~6, pp.~838--852, 2013.

\bibitem{6772207}
A.~D. Wyner, ``The wire-tap channel,'' {\em Bell System Technical Journal},
  vol.~54, no.~8, pp.~1355--1387, 1975.

\bibitem{courtade2012information}
T.~Courtade, ``Information masking and amplification: The source coding
  setting,'' in {\em Proceedings of the IEEE International Symposium on
  Information Theory Proceedings (ISIT)}, pp.~189--193, 2012.

\bibitem{yamamoto1983source}
H.~Yamamoto, ``A source coding problem for sources with additional outputs to
  keep secret from the receiver or wiretappers,'' {\em IEEE Transactions on
  Information Theory}, vol.~29, no.~6, pp.~918--923, 1983.

\bibitem{yamamoto1988rate}
H.~Yamamoto, ``A rate-distortion problem for a communication system with a
  secondary decoder to be hindered,'' {\em IEEE Transactions on Information
  Theory}, vol.~34, no.~4, pp.~835--842, 1988.

\bibitem{samarati2001protecting}
P.~Samarati, ``Protecting respondents identities in microdata release,'' {\em
  IEEE transactions on Knowledge and Data Engineering}, vol.~13, no.~6,
  pp.~1010--1027, 2001.

\bibitem{sweeney2002k}
L.~Sweeney, ``k-anonymity: A model for protecting privacy,'' {\em International
  Journal of Uncertainty, Fuzziness and Knowledge-Based Systems}, vol.~10,
  no.~05, pp.~557--570, 2002.

\bibitem{1617392}
A.~Machanavajjhala, J.~Gehrke, D.~Kifer, and M.~Venkitasubramaniam,
  ``$\ell$-diversity: privacy beyond $k$-anonymity,'' in {\em 22nd
  International Conference on Data Engineering (ICDE'06)}, pp.~24--24, 2006.

\bibitem{narayanan2008robust}
A.~Narayanan and V.~Shmatikov, ``Robust de-anonymization of large sparse
  datasets,'' in {\em Security and Privacy, 2008. SP 2008. IEEE Symposium on},
  pp.~111--125, IEEE, 2008.

\bibitem{su2017anonymizing}
J.~Su, A.~Shukla, S.~Goel, and A.~Narayanan, ``De-anonymizing web browsing data
  with social networks,'' in {\em Proceedings of the 26th International
  Conference on World Wide Web}, pp.~1261--1269, 2017.

\bibitem{de2013unique}
Y.-A. De~Montjoye, C.~A. Hidalgo, M.~Verleysen, and V.~D. Blondel, ``Unique in
  the crowd: The privacy bounds of human mobility,'' {\em Scientific reports},
  vol.~3, p.~1376, 2013.

\bibitem{bambauer2013fool}
J.~Bambauer, K.~Muralidhar, and R.~Sarathy, ``Fool's gold: an illustrated
  critique of differential privacy,'' {\em Vanderbilt Journal of Entertainment
  \& Technology Law}, vol.~16, p.~701, 2013.

\bibitem{farokhi2016optimal}
F.~Farokhi, J.~Milosevic, and H.~Sandberg, ``Optimal state estimation with
  measurements corrupted by {Laplace} noise,'' in {\em Decision and Control
  (CDC), 2016 IEEE 55th Conference on}, pp.~302--307, IEEE, 2016.

\bibitem{garfinkel2018issues}
S.~L. Garfinkel, J.~M. Abowd, and S.~Powazek, ``Issues encountered deploying
  differential privacy,'' in {\em Proceedings of the 2018 Workshop on Privacy
  in the Electronic Society}, pp.~133--137, ACM, 2018.

\bibitem{bild2018safepub}
R.~Bild, K.~A. Kuhn, and F.~Prasser, ``{SafePub}: A truthful data anonymization
  algorithm with strong privacy guarantees,'' {\em Proceedings on Privacy
  Enhancing Technologies}, vol.~2018, no.~1, pp.~67--87, 2018.

\bibitem{bhaskar2011noiseless}
R.~Bhaskar, A.~Bhowmick, V.~Goyal, S.~Laxman, and A.~Thakurta, ``Noiseless
  database privacy,'' in {\em International Conference on the Theory and
  Application of Cryptology and Information Security}, pp.~215--232, Springer,
  2011.

\bibitem{nabar2006towards}
S.~U. Nabar, B.~Marthi, K.~Kenthapadi, N.~Mishra, and R.~Motwani, ``Towards
  robustness in query auditing,'' in {\em Proceedings of the 32nd international
  conference on Very large data bases}, pp.~151--162, VLDB Endowment, 2006.

\bibitem{hartley1928transmission}
R.~V.~L. Hartley, ``Transmission of information,'' {\em Bell System Technical
  Journal}, vol.~7, no.~3, pp.~535--563, 1928.

\bibitem{kolmogorov1959varepsilon}
A.~N. Kolmogorov and V.~M. Tikhomirov, ``$\varepsilon$-entropy and
  $\varepsilon$-capacity of sets in function spaces,'' {\em Uspekhi
  Matematicheskikh Nauk}, vol.~14, no.~2, pp.~3--86, 1959.
\newblock English translation American Mathematical Society Translations,
  series 2, vol. 17, pp. 277–364.

\bibitem{renyi1961measures}
A.~Renyi, ``On measures of entropy and information,'' in {\em Proc. of the
  Fourth Berkeley Symp. on Math. Statist. and Prob.}, vol.~1, pp.~547--561,
  1961.

\bibitem{nair2013nonstochastic}
G.~N. Nair, ``A nonstochastic information theory for communication and state
  estimation,'' {\em IEEE Transactions on Automatic Control}, vol.~58, no.~6,
  pp.~1497--1510, 2013.

\bibitem{jagerman1969varepsilon}
D.~Jagerman, ``$\varepsilon$-entropy and approximation of bandlimited
  functions,'' {\em SIAM Journal on Applied Mathematics}, vol.~17, no.~2,
  pp.~362--377, 1969.

\bibitem{nair2012nonstochastic}
G.~N. Nair, ``A nonstochastic information theory for feedback,'' in {\em
  Decision and Control (CDC), 2012 IEEE 51st Annual Conference on},
  pp.~1343--1348, IEEE, 2012.

\bibitem{duan2015transfer}
P.~Duan, F.~Yang, S.~L. Shah, and T.~Chen, ``Transfer zero-entropy and its
  application for capturing cause and effect relationship between variables,''
  {\em IEEE Transactions on Control Systems Technology}, vol.~23, no.~3,
  pp.~855--867, 2015.

\bibitem{wiese2016uncertain}
M.~Wiese, K.~H. Johansson, T.~J. Oechtering, P.~Papadimitratos, H.~Sandberg,
  and M.~Skoglund, ``Uncertain wiretap channels and secure estimation,'' in
  {\em Information Theory (ISIT), 2016 IEEE International Symposium on},
  pp.~2004--2008, IEEE, 2016.

\bibitem{8662687}
F.~{Farokhi}, ``Development and analysis of deterministic privacy-preserving
  policies using non-stochastic information theory,'' {\em IEEE Transactions on
  Information Forensics and Security}, 2019.
\newblock In Press.

\bibitem{barber2014privacy}
R.~F. Barber and J.~Duchi, ``Privacy: A few definitional aspects and
  consequences for minimax mean-squared error,'' in {\em 53rd IEEE Conference
  on Decision and Control}, pp.~1365--1369, IEEE, 2014.

\bibitem{cover2012elements}
T.~M. Cover and J.~A. Thomas, {\em Elements of Information Theory}.
\newblock Wiley, 2012.

\bibitem{kaggle_slovak}
M.~Sabo, ``Young people survey: Explore the preferences, interests, habits,
  opinions, and fears of young people.''
\newblock available online at Kaggle.com, last visit: 8-Mar-2019.
  \url{https://www.kaggle.com/miroslavsabo/young-people-survey}.

\end{thebibliography}

\end{document}